\newcommand{\cT}{\mathcal{T}}
\newcommand{\blue}[1]{\textcolor{black}{#1}}
\newcommand{\red}[1]{\textcolor{black}{#1}}
\newcommand{\magenta}[1]{\textcolor{black}{#1}}
\newcommand{\revision}[1]{\textcolor{black}{#1}}
\title{Deciding the existence of a cherry-picking sequence is hard on two trees}
\author{Janosch D\"ocker\inst{1}, Leo van Iersel\inst{2}, Steven Kelk\inst{3}, Simone Linz\inst{4}}
\institute{Department of Computer Science, University of T\"ubingen, Germany\\
\email{janosch.doecker@uni-tuebingen.de}
\and
Delft Institute of Applied Mathematics, Delft University of Technology\\
PO-box 5031, 2600 GA Delft, The Netherlands\\
\email{L.J.J.vanIersel@tudelft.nl}
\and Department of Data Science and Knowledge Engineering (DKE),\\ Maastricht University, The Netherlands,\\ \email{steven.kelk@maastrichtuniversity.nl}
\and Department of Computer Science, University of Auckland, New Zealand,\\
\email{s.linz@auckland.ac.nz}}
\providecommand{\keywords}[1]{\textit{Keywords:} #1}
\begin{document}
\maketitle

\begin{abstract}
Here we show that deciding whether two rooted binary phylogenetic trees on the same set of taxa permit a \emph{cherry-picking sequence}, a special type of elimination order on the taxa, is \red{NP-complete}. This improves on an earlier result which proved hardness for eight or more trees. Via a known equivalence between cherry-picking sequences and temporal phylogenetic networks,  our result proves that 
it is \red{NP-complete} to determine
the existence of a 
temporal phylogenetic network that contains
topological embeddings of both trees. The hardness result also
greatly strengthens previous
inapproximability results for
the minimum temporal-hybridization number problem. 
This is the optimization version of the problem where we wish to construct a temporal phylogenetic network that topologically embeds two given rooted binary phylogenetic trees and that has a minimum number of indegree-2 nodes, which represent events such as hybridization and horizontal gene transfer.
We end on a positive note, pointing out that fixed parameter tractability results in this area are likely to ensure the continued relevance of the temporal phylogenetic network model.
\end{abstract}

\keywords{phylogenetics, NP-hardness, satisfiability, phylogenetic networks, elimination orders, \revision{temporal networks}.}

\section{Introduction}

In the field of phylogenetics it is common to represent the evolution of a set of species $X$ by a rooted phylogenetic tree; essentially a rooted, bifurcating tree whose leaves are bijectively labeled by $X$ \cite{SempleSteel2003}. Driven by the realization that evolution is not always treelike there has been growing attention for the construction of phylogenetic \emph{networks}, which generalize phylogenetic trees to directed acyclic graphs \cite{Bapteste13,Gusfield14,HusonRuppScornavacca10,Soucy15}. 
One well known optimization problem for phylogenetic networks is as follows: given a set of rooted phylogenetic trees $\mathcal{T}$ on the same set of taxa $X$, compute a phylogenetic network $N = (V,E)$ which \emph{displays} (i.e. contains topological embeddings of) all the trees in $\mathcal{T}$, such that the \emph{reticulation number} $|E|-(|V|-1)$ is minimized. When $N$ is restricted to being \emph{binary} this is equivalent to minimizing the number of nodes of $N$ with indegree-2. This optimization model is known as \emph{minimum hybridization} and it has been extensively studied in the last decade (see e.g. \cite{sempbordfpt2007,chen2010hybridnet,approximationHN,van2016kernelizations,whidden2013fixed}). More recently variations of minimum hybridization have been proposed which constrain the topology of $N$ to be more biologically relevant. One such constraint is to demand that $N$ is \emph{temporal}~\cite{temporal}. Informally, a phylogenetic network $N$ is temporal if (i) the nodes of $N$ can be labeled with times, such that nodes of indegree-2 have contemporaneous parents, and time moves strictly forwards along treelike parts of the network; and (ii) each non-leaf vertex has a child whose indegree is 1. Property (ii) by itself is referred to as {\it tree-child} in the literature~\cite{tree-child}. It has been shown that when $|\mathcal{T}|=2$ it is NP-hard to solve the minimum temporal-hybridization number problem to optimality \cite{humphries2013complexity}. To establish the result, the authors proved that the problem is in fact APX-hard, which implies that for some constant $c > 1$ it is not possible in polynomial time to approximate the optimum within a factor of $c$, unless
P=NP \cite{papayanna}.

A more fundamental question remained, however, open: is it  possible in polynomial time to determine if \emph{any} temporal phylogenetic network exists that displays the input trees, regardless of how large $|E|-(|V|-1)$ is \cite{humphries2013cherry,steel2016phylogeny}? Here we settle
this question by showing that, even for $|\mathcal{T}|=2$, it is \red{NP-complete} to determine whether such a network exists. We prove this by using the \emph{cherry-picking} characterization of temporal phylogenetic networks introduced in \cite{humphries2013cherry}. There it was shown that, given an arbitrarily large set $\mathcal{T}$ of rooted binary phylogenetic trees on $X$, there exists a temporal phylogenetic network that displays each tree in $\cT$ precisely if $\cT$ has a so-called cherry-picking sequence. Informally, a \emph{cherry-picking sequence} on $\mathcal{T}$ is an elimination order on $X$ that deletes one element of $X$ at a time, where at each step only elements can be deleted which are in a cherry of every tree in $\cT$ \cite{humphries2013cherry}. We show here that the seminal \revision{NP-complete} problem 3-SAT \cite{Karp1972} can be reduced to the question of whether two trees permit a cherry-picking sequence. This improves upon a recent result by two of the present authors which shows that, for $|\mathcal{T}| \geq 8$, it is NP-complete to determine whether $\mathcal{T}$ has a cherry-picking sequence \cite{8trees}. \red{Our} hardness result is highly non-trivial and requires extensive gadgetry; \red{to clarify we include an explicit example of the construction after the main proof.}

As we discuss in the final section of the paper, this result has quite significant negative consequences: given that the decision problem is already hard, the
\red{minimum temporal-hybridization number} problem is in some sense ``effectively inapproximable'',
even for two trees. This greatly strengthens the earlier APX-hard inapproximability result. \red{Nevertheless, as we subsequently point out, positive fixed parameter tractability (FPT) \cite{cygan2015parameterized} results for the minimum temporal-hybridization number problem do already exist \cite{humphries2013cherry} and our results emphasize the importance of further developing such algorithms, since fixed parameter tractability forms the most promising remaining avenue towards practical exact methods.}


\section{Preliminaries}

A \emph{rooted binary phylogenetic tree} on a set of taxa $X$, where $|X| \geq 2$, is
a rooted, connected, directed tree with a unique \emph{root} (a vertex of indegree-0
and outdegree-2), where the leaves (vertices with indegree-1 and outdegree-0) are bijectively labeled by $X$, and where all interior vertices of the tree are indegree-1 and outdegree-2. If $|X|=1$, we consider the single isolated node labeled by the unique element of $X$, to also be a rooted binary phylogenetic tree. Since all phylogenetic trees considered in this paper are rooted and binary, we henceforth write \emph{tree} for brevity, and draw no distinction between the elements of $X$ and the leaves they label. Let $T$ be a tree, and let $\cT=\{T_1,T_2,\ldots,T_m\}$ be a set of trees. We use $X(T)$ to denote the taxa set of $T$ and, similarly, we use $X(\cT)$ to denote the union of taxa sets over all elements in $\cT$, i.e. $X(\cT)=X(T_1)\cup X(T_2)\cup\ldots\cup X(T_m)$. Lastly, for two distinct elements $x$ and $y$ in $X$, we call $\{x,y\}$ a \emph{cherry} of $T$ if they have the same parent. A tree with a single cherry is referred to as \red{a} {\it caterpillar}.

Now, let $T$ be a tree on $X$, and let $X'=\{x_1,x_2,\ldots,x_k\}$ be an arbitrary set. We write $T|X'$ to denote the
tree obtained from $T$ by taking the minimum subtree spanning the elements of $X'$
and repeatedly suppressing all vertices with indegree-1 and outdegree-1. \revision{(If $v$ is a vertex with indegree-1 and outdegree-1, with incident edges $(u,v)$ and $(v,w)$, then \emph{suppressing} $v$ is achieved as follows: $v$ and its two incident edges are deleted, and an edge $(u,w)$ is added.)}

Furthermore, we also write $T[-x_1,x_2,\ldots,x_k]$ or $T[-X']$ for short to denote $T|(X-X')$. If $X\cap X'=\emptyset$, then $T|X'$ is the null tree  and $T[-X']$ is $T$ itself.
For a set $\cT=\{T_1,T_2,\ldots,T_m\}$ of trees on subsets of $X$, we write $\cT|X'$ (resp. $\cT[-X']$) when referring to the set  $\{T_1|X',T_2|X',\ldots,T_m|X'\}$ (resp.~$\{T_1[-X'],T_2[-X'],\ldots,T_m[-X']\}$).  Lastly, a rooted binary phylogenetic tree is {\it pendant} in $T$ if it can be detached from $T$ by deleting a single edge.

\subsection{Cherry-picking sequence problem on trees with the same set of taxa}
\label{subsec:classical}

\begin{figure}[t]
\centering
\includegraphics[]{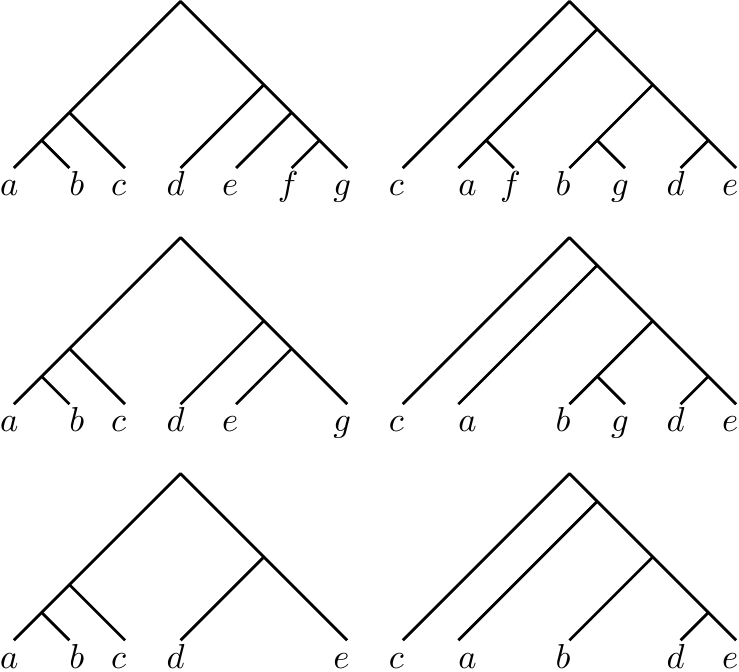}
\caption{A cherry-picking sequence for the two trees $T$ and $T'$ at the top is $(f,g,d,b,a,c,e)$. The two trees in the middle have been obtained from $T$ and $T'$, respectively, by pruning $f$, and the two trees at the bottom have been obtained from $T$ and $T'$ by first pruning $f$ and, subsequently, pruning $g$. While we can alternatively prune $a$ and, subsequently, $b$, from $T$ and $T'$, note that no cherry-picking sequence exists for $T$ and $T'$ whose first two elements are $a$ and $b$.}
\label{fig:CPSexample}
\end{figure}

We say that
a taxon $x \in X$ is \emph{in} a cherry of $T$ if there exists
some $y \neq x$ such
that $\{x,y\}$ is a cherry of $T$ or $T$ consists of a single leaf $x$. If $x$ is in a cherry of $T$, we say that
$x$ is \emph{picked} (or \emph{pruned}) from $T$ to denote the operation of
replacing $T$ with $T[-x]$. 
Given a set
of trees $\mathcal{T}$, all on the same set of taxa $X$, we say that a taxon $x \in X$
is \emph{available (for picking)} in $\mathcal{T}$ if $x$ is in a cherry in each tree in $\mathcal{T}$. When this is the case, we say that $x$ is \emph{picked} (or \emph{pruned}) from $\mathcal{T}$ to denote the operation of replacing $\cT$ with $\cT[-x]$. 

Let $\mathcal{T}$ be a set of trees on the same set of taxa $X$. A \emph{cherry-picking sequence} is an order on $X$, say $(x_1, x_2, \ldots, x_{|X|})$, such that each $x_i$ with $i\in\{1,2,\ldots,|X|\}$ is available in $\cT[-x_1,x_2,\ldots,x_{i-1}]$.
Such a sequence is not guaranteed to exist; if it does, we say that $\mathcal{T}$ \emph{permits} a cherry-picking sequence. It was shown in \cite{8trees} that \red{deciding} whether such a sequence exists is NP-complete if $|\mathcal{T}| \geq 8$. Note that, if $|\cT|=1$, then $\cT$ always has a cherry-picking sequence. To illustrate, a cherry-picking sequence for the two trees that are shown at the top of Figure~\ref{fig:CPSexample} is $(f,g,d,b,a,c,e)$.

\subsection{A more general cherry-picking sequence problem}

Let $\cT$ be a set of trees, and let $X = X(\cT)$. Suppose we consider the variant of the problem described in Section~\ref{subsec:classical} in which the trees in $\mathcal{T}$ do not necessarily have the same set of taxa. In this case, some taxa may be missing from some trees. This requires us to
generalize the concept of being \emph{in} a cherry of a tree. 
We say that a taxon $x$ is {\it in} a cherry of a tree $T$, if exactly
one of the following conditions holds:

\begin{enumerate}
\item $x \not \in X(T)$ or
\item $x \in X(T)$ and \revision{$T$ has a cherry $\{x,y\}$, where $x$ and $y$ are distinct elements in $X(T)$.}
\end{enumerate}
\red{(Note that, once again, this means that if $x$ is the only taxon
in $T$, then $x$ is vacuously considered to be in a cherry of $T$.)} It initially seems counter-intuitive to say, when condition 1 
applies, that $x$ is ``in'' a cherry of $T$. However, the idea behind this is that such trees do not constrain whether $x$ can be picked; they ``do not care''. More formally, we say that a taxon $x$ is \emph{available} in $\mathcal{T}$ if it is in a cherry in each tree in $\mathcal{T}$. Similar to Section~\ref{subsec:classical}, we say that an order on $X$, say $(x_1,x_2 \ldots, x_{|X|})$ is a {\it cherry-picking sequence} of $\mathcal{T}$ if each $x_i$ with $i\in\{1,2,\ldots,|X|\}$ is available in $\cT[-x_1,x_2,\ldots,x_{i-1}]$.
If a tree \red{becomes the null tree due to all its taxa being pruned away} then this tree plays no further role.
Moreover we note that, if all trees in $\mathcal{T}$ have the same set of taxa, then the more general definition of a cherry-picking sequence given in this subsection and that will be used throughout the rest of this paper coincides with that given in Section \ref{subsec:classical}.

\section{Main results}
In this section, we establish the main result of this paper. We start with two lemmas.

\begin{lemma}
\label{lem:sametaxa}
Let $\mathcal{T}$ be a set of $m$ trees on not necessarily the same set of taxa.
Then we can construct in polynomial time a set $\mathcal{T}'$ of $m$ trees all on the same set of taxa, such that $\mathcal{T}$ has a cherry-picking sequence if and only if $\mathcal{T}'$ does.
\end{lemma}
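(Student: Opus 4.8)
The plan is to reduce the general (possibly-missing-taxa) instance $\mathcal{T}$ to a same-taxa instance $\mathcal{T}'$ by re-inserting into each tree exactly the taxa it lacks, in a way that makes those re-inserted taxa behave like the ``don't care'' taxa of the general definition. Concretely, I would leave each original tree $T_i$ untouched as a pendant subtree and, for every taxon $a \in X \setminus X(T_i)$ missing from $T_i$, hang $a$ back as a pendant cherry $\{a, p_{i,a}\}$ attached along a spine \emph{above the root} of $T_i$ (so that the interior of $T_i$, and in particular its cherries, are never disturbed), where $p_{i,a}$ is a freshly introduced taxon. All fresh taxa are then added to every tree, so that $\mathcal{T}'$ is a set of $m$ trees on the common set $X' = X \cup \{p_{i,a}\}$, of size polynomial in that of $\mathcal{T}$. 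The point of the pendant cherry $\{a, p_{i,a}\}$ is that it stays a cherry, and hence keeps $a$ available in $T_i'$, for as long as we have not touched it; this is exactly what reproduces the semantics that $T_i$ does not constrain $a$.

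For the forward direction, given a cherry-picking sequence $\sigma$ of $\mathcal{T}$, I would build one for $\mathcal{T}'$ by first picking the taxa of $X$ in the order $\sigma$ and then cleaning up the fresh taxa. While following $\sigma$: a taxon $x$ present in $T_i$ is available in $T_i'$ exactly when it is available in $T_i$, since the original subtree is preserved and the pendant gadgets do not alter its internal cherries; a taxon $a$ missing from $T_i$ is available in $T_i'$ via its as-yet-untouched pendant cherry $\{a, p_{i,a}\}$, matching the fact that $T_i$ does not constrain $a$. Once $X$ is exhausted, every tree consists only of fresh taxa, and I would argue that these can be removed in one order that is valid in all $m$ trees simultaneously.

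For the backward direction, given a cherry-picking sequence $\tau$ of $\mathcal{T}'$, I would delete the fresh taxa from $\tau$ and show that the resulting order on $X$ is a cherry-picking sequence of $\mathcal{T}$. The structural fact to verify is that no gadget ever creates a cherry between two taxa of $X$ that was not already present in the original tree: every cherry of $T_i'$ is a cherry of $T_i$, or a cherry $\{a, p_{i,a}\}$ involving a missing taxon, or a cherry among fresh taxa, and the only way a \emph{present} taxon of $T_i$ is cherried with a fresh taxon is when $T_i$ has been reduced to a single leaf (which is available in $T_i$ by the single-leaf convention anyway). Hence whenever $\tau$ picks a taxon $x \in X$, it is either available in $T_i$ (when $x \in X(T_i)$) or missing from $T_i$, which is precisely availability in the general sense for $\mathcal{T}$.

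The hard part will be making the fresh ``partner'' taxa genuinely harmless: guaranteeing that their reinsertion and subsequent removal can be scheduled in an order valid in every tree at once, without ever forcing an order on the taxa of $X$ that $\mathcal{T}$ did not already permit. This is exactly where naive padding fails, since hanging the missing taxa as an ordinary caterpillar serializes them and couples them to the original subtree, whereas the general definition treats each missing taxon as independently and freely pickable; the pendant-cherry gadget is chosen to avoid this. The main technical burden therefore lies in the forward-direction cleanup argument, where the leftover fresh taxa (the partners $p_{i,a}$, which sit differently in the tree that owns them than in the others) must be shown to admit a single common elimination order; this will drive the precise choice of how the gadgets are laid out along the shared spine.
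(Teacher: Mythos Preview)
Your intuition is the same as the paper's --- reinsert each missing taxon as a cherry with a fresh ``partner'' so that it is trivially available --- but the proposal has two concrete gaps, precisely where you yourself flag difficulty.

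First, the construction is underspecified in the crucial place. You introduce a distinct partner $p_{i,a}$ for each pair (tree, missing taxon), but you never say where the foreign partners $p_{j,b}$ with $j\neq i$ sit inside $T_i'$. This is exactly why your forward-direction cleanup is hard: after all of $X$ has been picked, the residual trees on the fresh taxa are \emph{not} isomorphic (the partners sit ``differently in the tree that owns them than in the others'', as you say), and you give no argument that a common elimination order exists. The paper sidesteps this completely by a sharper choice: it uses \emph{one} partner $y'$ per missing taxon $y$ (not one per tree), and places all partners in a single fixed subtree $T_{Y'}$ that appears identically in every $T_i'$; the missing $y$'s are then attached as cherries $\{y,y'\}$ inside this common subtree. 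After picking all of $X$, every tree becomes literally the same tree on $Y'\cup\{\rho\}$, so the cleanup is immediate.

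Second, you do not handle the moment when an original $T_i$ has been pruned to its last leaf $x$. In your spine layout, $x$'s new sibling may well be an intact two-leaf gadget $\{a,p_{i,a}\}$ whose $a$ has not yet appeared in $\sigma$; then $x$ is \emph{not} in a cherry of $T_i'$, even though it is vacuously available in $T_i$. (Your backward-direction claim that ``the only way a present taxon of $T_i$ is cherried with a fresh taxon is when $T_i$ has been reduced to a single leaf'' presupposes this always yields a cherry, which it need not.) The paper fixes this with a single sentinel taxon $\rho$ placed as the immediate sibling of $T_i$ in every $T_i'$, so that the last surviving leaf of $T_i$ is always in the cherry $\{x,\rho\}$.

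In short, the paper's two design choices --- one shared partner per taxon housed in a common pendant subtree, plus the sentinel $\rho$ --- are exactly what make both directions go through without the ``hard part'' you anticipate; your per-tree partners create that hard part rather than solve it.
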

\begin{proof} Let $X=X(\cT)$, and let $Y = \{ y_1, y_2, \ldots\}$ be the set of taxa that are missing from at least one input tree. Let $Y' = \{y'_1, y'_2, ...\}$ be a disjoint copy of this set. Every modified tree will have taxon set $X \cup Y' \cup \{\rho\}$.  The idea is as follows. Let $T_{Y'}$ be an arbitary rooted binary tree on $Y'$. For each input tree $T_i$, we start by joining $T_i$ and  $\rho$ beneath a root, and then join this new tree and $T_{Y'}$ together beneath a root. Next, for each $y_j \in Y$ that is missing from $T_i$, we add $y_j$ by subdividing the edge that feeds into $y'_j$ and
attaching $y_j$ there (so $y_j$ and $y'_j$ become siblings). For an example, see Figure \ref{fig:sameSetOfTaxa}. We call the set of trees constructed in this way $\mathcal{T}'$. 
The high-level idea is that if a tree $T_i$ does not contain some taxon $x$, we attach $x$ just above $x'$ and thus ensure that, trivially, $x$ is in a cherry in that tree (i.e. together with $x'$). So $T_i$ does ``not care'' about $x$ and will not obstruct it \red{from} being pruned.

\begin{figure}
\centering
\includegraphics[]{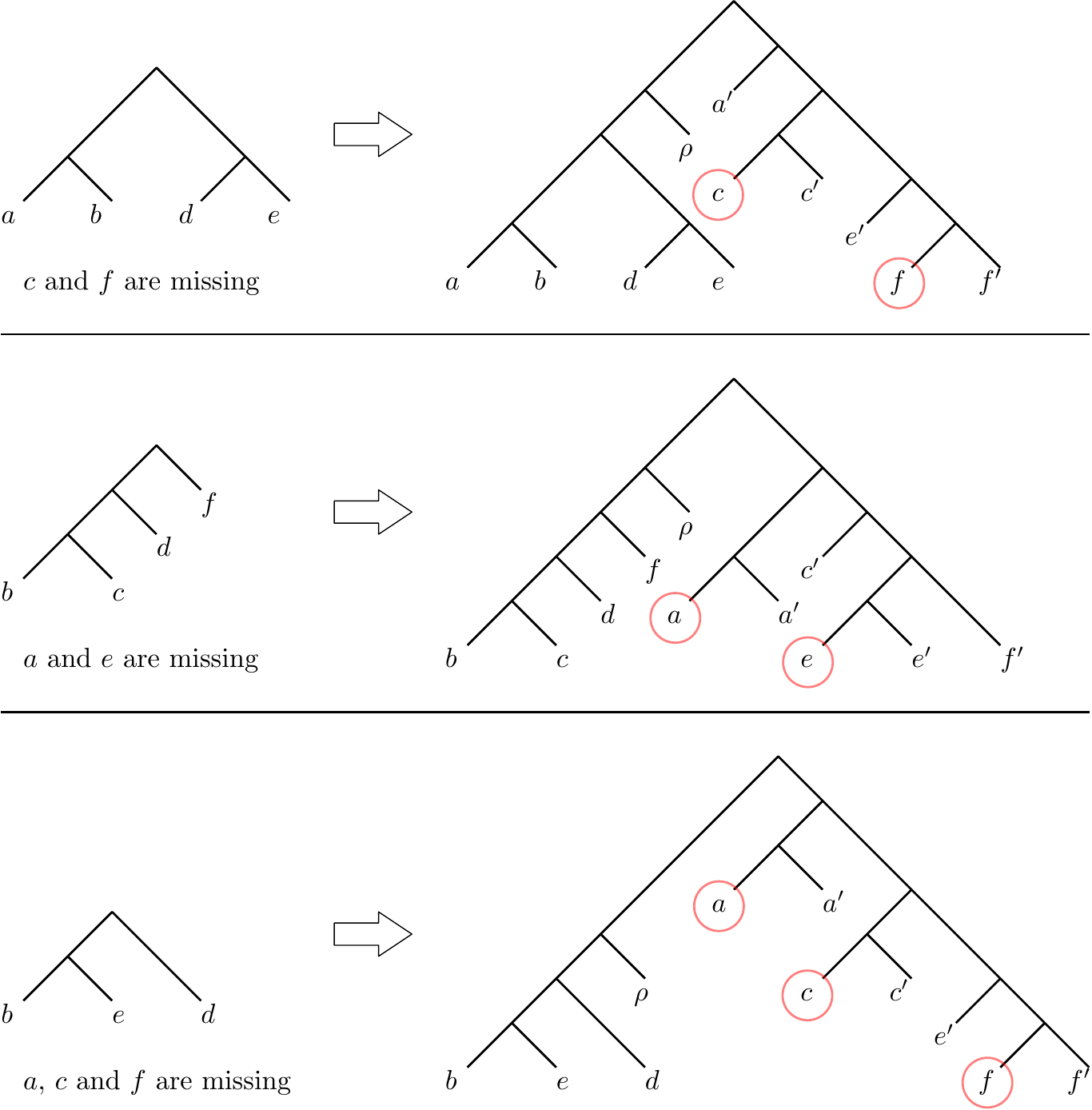}
\caption{The construction described in Lemma \ref{lem:sametaxa}. Here $Y$, the
set of taxa missing from at least one tree, is $\{a,c,e,f\}$. In each modified tree the
artificially added members of $Y$ are circled; note that they are always in cherries. A cherry-picking sequence for the original trees is $e, b, c, d, a, f$. A corresponding sequence for the
modified trees is $e, b, c, d, a, f, f', e', c', a', \rho$.}
\label{fig:sameSetOfTaxa}
\end{figure}


First, assume that $\mathcal{T}$ has a cherry-picking sequence $\sigma$. (We show that $\mathcal{T'}$ has a cherry-picking sequence). We start by applying exactly the same sequence of pruning operations to $\mathcal{T'}$. These picking operations will always
be possible because, if a taxon $y \in Y$ is missing from a tree $T_i \in \mathcal{T}$, it will
be in a cherry together with $y'$ in the corresponding tree of $\mathcal{T'}$. After doing this, all the trees will be isomorphic and have the same set of taxa: $Y' \cup \{\rho\}$. At this point these remaining taxa can be pruned in bottom-up fashion (since two isomorphic trees always have a cherry-picking sequence). Hence $\mathcal{T'}$ has a cherry-picking sequence. Note that the taxon $\rho$ is included to ensure that if, during \red{$\sigma$},
a tree $T_i$ has been pruned down to a single taxon, this taxon can still be pruned in the corresponding tree of $\mathcal{T'}$ (because it is sibling to $\rho$).

In the other direction, let $\sigma'$ be a cherry-picking sequence for $\mathcal{T'}$. Let $\sigma$ be the
sequence obtained by deleting all taxa from $\sigma'$ that are not in $X$. Let $x$ be an arbitrary element of $X$ and let $i$ be the position of $x$ in $\sigma'$. Let $\ell'_1, \ell'_2, \ldots, \ell'_{i-1}$ be
the prefix of $\sigma'$ that has been pruned from $\mathcal{T'}$ prior to $x$, and let $\ell_1, \ell_2, \ldots, \ell_j$ (where \red{$j \leq i-1$}) 
be the prefix of $\sigma$ that has been pruned prior to $x$. We claim that, if $x$ is available in 
$\mathcal{T'}[-\ell'_1, \ell'_2, \ldots, \ell'_{i-1}]$, then it is also available in $\mathcal{T}[-\ell_1, \ell_2, \ldots, \ell_{j}]$.
To see this, let $T$ be an arbitrary tree in $\mathcal{T}[-\ell_1, \ell_2, \ldots, \ell_{j}]$. If $x \not \in X(T)$, then (by definition) $x$ is in a cherry of $T$. If $x$ is the only taxon in $T$, then it is (also by definition) in a cherry. So the
only case remaining is that  $x \in X(T)$ and $|X(T)| \geq 2$. Let $T'$ be the tree from $\mathcal{T'}[-\ell'_1, \ell'_2, \ldots, \ell'_{i-1}]$ that corresponds to $T$. The critical observation here is that, by construction, $T$ occurs as a pendant subtree of $T'$. So if
$x$ was not in a cherry of $T$, \revision{then $x$ would not be in a cherry of $T'$ which gives a contradiction to the assumption that $\cT'$ has a cherry-picking sequence}. Hence, $x$ is in a cherry of $T$. Due to the arbitrary choice of $x$ and $T$, it follows that $\sigma$ is a cherry-picking sequence for $\mathcal{T}$.

It remains to show that the reduction
is polynomial time. Observe that,
depending on the instance, the size
of $\mathcal{T}$ can be dominated by
$|X|$ or $m$. Each of the $m$ trees in $\mathcal{T'}$ contains
$|X| + |Y| + 1$ taxa, where $|Y| \leq |X|$, and the transformation itself involves
straightforward operations, so overall the reduction takes poly($|X|$, $m$) time.

\qed
\end{proof} 

Let $\cT$ be a set of rooted binary trees, and let $T_i$ and $T_j$ be two trees in $\cT$ such that $X(T_i) \cap X(T_j) = \emptyset$. Furthermore, let $\rho_i$ and $\rho_j$ be the root vertex of $T_i$ and $T_j$, respectively. Obtain a new tree from $T_i$ and $T_j$ in the following way.
\begin{enumerate}
\item Create a new vertex $\rho$ and add new edges $e=(\rho,\rho_i)$ and  $e'=(\rho,\rho_j)$.
\item Subdivide $e$ (resp. $e'$) with a new vertex $v$ (resp. $v'$) and add a new edge $(v,x)$ (resp. \revision{$(v',y)$}), where $x$ and $y$ are two new taxa such that $\{x,y\}\cap X(\cT)=\emptyset$.
\end{enumerate}
\noindent We call the resulting rooted binary tree the {\it compound tree} of $T_i$ and $T_j$. To illustrate, Figure~\ref{fig:mergeGadget} depicts the compound tree of $T_i$ and $T_j$.

\begin{figure}
\centering
\includegraphics[]{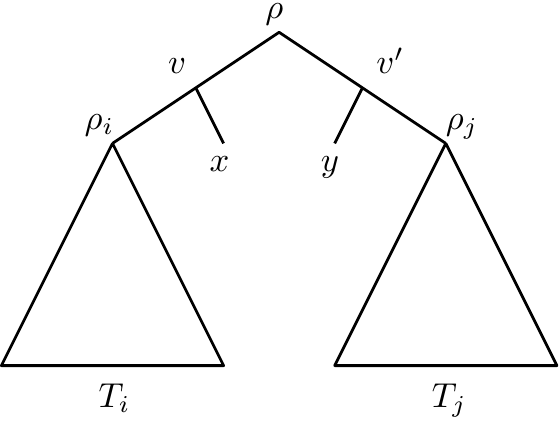}
\caption{The compound tree of two rooted binary trees $T_i$ and $T_j$. The taxon $x$ (resp. $y$) simply ensures that the
last taxon pruned away in the $T_i$ (resp. $T_j$) part \revision{is in a cherry with $x$ (resp. $y$).}}
\label{fig:mergeGadget}
\end{figure}

The next lemma shows that, for a set $\cT$ of rooted binary trees, the replacement of two trees in $\cT$ with their compound tree preserves the  existence and non-existence of a cherry-picking sequences for $\cT$. 

\begin{lemma}
\label{lem:disjointtaxa}
Let $\mathcal{T}$ be a set of rooted binary trees, and let $T_i$ and $T_j$ be two trees in $\mathcal{T}$ such that $X(T_i) \cap X(T_j) = \emptyset$. Let $T_{i,j}$ be the compound tree of $T_i$ and $T_j$. Then $\cT$ has a cherry-picking sequence if and only if $(\cT-\{T_i,T_j\})\cup\{T_{i,j}\}$ has a cherry-picking sequence.
\end{lemma}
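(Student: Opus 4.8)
The plan is to exploit the fact that, in the compound tree $T_{i,j}$, the trees $T_i$ and $T_j$ appear as pendant subtrees (hanging below $v$ and $v'$ respectively), and that the two new taxa $x$ and $y$ occur in no other tree of $(\cT-\{T_i,T_j\})\cup\{T_{i,j}\}$. Because $x$ and $y$ are fresh, by the ``does not care'' convention they are vacuously in a cherry of every tree except $T_{i,j}$, so their only role is to keep the last surviving taxon of the $T_i$-part (resp.\ $T_j$-part) in a cherry, namely the cherry $\{x,\cdot\}$ (resp.\ $\{y,\cdot\}$). The central structural invariant I would maintain is that, after pruning any common prefix of taxa, the $T_i$-part of $T_{i,j}$ is in exactly the same state as $T_i$ is in $\cT$ (and likewise for $T_j$), except that while the $T_i$-part is nonempty its root hangs below $v$ alongside $x$.

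For the forward direction, suppose $\cT$ has a cherry-picking sequence $\sigma$. I would apply the same sequence of picks to $\cT'=(\cT-\{T_i,T_j\})\cup\{T_{i,j}\}$ and argue each pick remains available. Every tree other than $T_{i,j}$ is unchanged, so availability there is inherited from $\sigma$. For a taxon $z\in X(T_i)$ picked at some step, while the $T_i$-part still has at least two taxa the cherry containing $z$ is internal to that pendant subtree and is therefore also a cherry of $T_{i,j}$; and when $z$ is the last taxon of the $T_i$-part it sits in the cherry $\{x,z\}$, so it is again available, with $x$ rising to become a child of $\rho$ once $z$ is picked. The symmetric statement holds for $T_j$ and $y$. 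After exhausting $\sigma$, the tree $T_{i,j}$ has been reduced to the single cherry $\{x,y\}$ while every other tree has become null, so appending $x$ then $y$ completes a cherry-picking sequence for $\cT'$.

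For the converse, suppose $\cT'$ has a cherry-picking sequence $\sigma'$, and let $\sigma$ be obtained by deleting $x$ and $y$ from $\sigma'$. Since $x,y$ appear only in $T_{i,j}$, deleting them does not change the state of any other tree, and the intersection of the relevant prefix with $X(T_i)$ is unaffected; so the state of $T_i$ in $\cT$ after the $\sigma$-prefix matches the $T_i$-part of $T_{i,j}$ after the $\sigma'$-prefix. Using the pendant-subtree observation exactly as in the proof of Lemma~\ref{lem:sametaxa}: if a taxon $z\in X(T_i)$ were not in a cherry of $T_i$ in its current state (while $T_i$ still had at least two taxa), then $z$ would not be in a cherry of $T_{i,j}$ either, contradicting availability in $\sigma'$. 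The remaining cases---$z$ being the last taxon of the $T_i$-part, with or without $x$ already having been picked---are handled by the single-leaf convention, since a lone taxon is vacuously in a cherry. Thus every $z\in X(\cT)$ is available at its step in $\sigma$, so $\sigma$ is a cherry-picking sequence for $\cT$.

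The routine bookkeeping is the verification that availability in the unchanged trees transfers in both directions, which is essentially identical to the corresponding argument in Lemma~\ref{lem:sametaxa}. The main obstacle I anticipate is the careful case analysis around the helper taxa $x$ and $y$: I must check that picking $x$ (or $y$) ``early'' in $\sigma'$ is harmless---this holds because $x$ can only be picked once the $T_i$-part is a single leaf, after which that leaf simply becomes a child of $\rho$ and the mirroring invariant still yields a single-leaf (hence vacuous-cherry) state---and, in the forward direction, that the last taxon of each part is genuinely kept available by its partner taxon. Making the mirroring invariant precise through these boundary transitions is where the proof needs the most care.
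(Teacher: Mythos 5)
Your proposal is correct and follows essentially the same route as the paper: in the forward direction you replay $\sigma$ on the compound tree (using that $T_i$ and $T_j$ are pendant, and that the last surviving taxon of each part forms a cherry with $x$ resp.\ $y$) and append $x,y$ at the end; in the converse you delete $x$ and $y$ from $\sigma'$ and use the pendant-subtree observation to transfer availability. The paper's write-up is terser (it declares the converse ``straightforward to check''), but your boundary-case analysis around $x$ and $y$ is exactly the verification it leaves implicit.
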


\begin{proof} 
To ease reading, let $\cT'=(\cT-\{T_i,T_j\})\cup\{T_{i,j}\}$. Furthermore, let $|X(\cT)|=n$, and let $x$ and $y$ be the unique two taxa in $X(T_{i,j})$ that do not label a leaf in $T_i$ or $T_j$.

Suppose that $\sigma=(\ell_1,\ell_2,\ldots,\ell_n)$ is a cherry-picking sequence for $\cT$. Let $i'$ be the maximum index of an element in $\sigma$ such that $\ell_{i'}\in X(T_i)$ and, similarly, let $j'$ be the maximum index of an element in $\sigma$ such that $\ell_{j'}\in X(T_j)$. Then $\cT[-\ell_1,\ell_2,\ldots\ell_{i'-1}]$ contains a tree that is a single vertex labeled $\ell_{i'}$ and $\cT[-\ell_1,\ell_2,\ldots\ell_{j'-1}]$ contains a tree that is a single vertex labeled $\ell_{j'}$. Moreover, by the construction of $T_{i,j}$, the set $\cT'[-\ell_1,\ell_2,\ldots\ell_{i'-1}]$ contains a tree with cherry $\{\ell_{i'},x\}$ and the set $\cT'[-\ell_1,\ell_2,\ldots\ell_{j'-1}]$ contains a tree with cherry $\{\ell_{j'},y\}$. 
Since $T_i$ and $T_j$ are pendant subtrees in $T_{i,j}$ and $\sigma$ is a cherry-picking sequence for $\cT$, it now follows that $$(\ell_1,\ell_2,\ldots,\ell_n,x,y)$$ is a cherry-picking sequence for $\cT'$.

Conversely, suppose that $\sigma'=(\ell_1,\ell_2,\ldots,\ell_{n+2})$ is a cherry-picking sequence for $\cT'$. Let $\{\ell_{i'},\ell_{j'}\}=\{x,y\}$. Without loss of generality, we may assume that $i'<j'$. Then, as $x$ and $y$ are only contained in the leaf set of $T_{i,j}$, it is straightforward to check that $$(\ell_1,\ell_2,\ldots,\ell_{i'-1},\ell_{i'+1},\ell_{i'+2},\ldots,\ell_{j'-1},\ell_{j'+1},\ell_{j'+2},\ldots,\ell_{n+2})$$ is a cherry-picking sequence for $\cT$.
\qed
\end{proof}

Now we establish the main result of this paper.

\begin{theorem}
\label{thm:hard}
It is {\em NP}-complete to decide if two rooted binary phylogenetic trees $T$ and $T'$ on $X$ have a cherry-picking sequence.
\end{theorem}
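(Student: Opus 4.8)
The plan is to prove NP-completeness in the usual two parts. Membership in NP is straightforward: a cherry-picking sequence is simply an ordering of the $|X|$ taxa, hence a certificate of polynomial size, and it can be checked in polynomial time by pruning the two trees step by step and verifying at each of the $|X|$ steps that the prescribed taxon is currently in a cherry of both trees. The substance of the theorem therefore lies in establishing NP-hardness, which I would do by a polynomial-time reduction from 3-SAT.

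Given a 3-SAT formula $\phi$ on variables $v_1,\ldots,v_n$ with clauses $C_1,\ldots,C_m$, I would first construct an auxiliary instance consisting of several rooted binary trees, on not necessarily identical taxon sets, with the property that $\phi$ is satisfiable if and only if this instance permits a cherry-picking sequence. The construction would be gadget-based. Each variable would be represented by a small \emph{variable gadget} built on its own private block of taxa, engineered so that the only freedom at the corresponding stage of any cherry-picking sequence is a single irrevocable binary choice, thereby committing the variable to \emph{true} or \emph{false}. Each clause would be represented by a \emph{clause gadget}, again on private taxa, which can be fully pruned only when at least one of its three literals has been set the satisfying way, and which otherwise forces a deadlock in which no taxon is available. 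Finally, \emph{connector} structure would propagate each variable's committed value to every clause gadget in which that literal appears, and would enforce that all variable choices are made before the clause gadgets must be cleared.

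The principal obstacle, and the reason the construction requires extensive gadgetry, is making these pieces interlock faithfully against the considerable freedom inherent in cherry picking, where many taxa may be simultaneously available. The gadgets must genuinely force exactly one binary decision per variable with no possibility of ``hedging'', must rule out any out-of-order pruning that would let a sequence bypass an unsatisfied clause, and must broadcast a literal's value to possibly many clauses without accidentally creating spurious cherries that make unintended taxa available. Ruling out all such cheating sequences, rather than merely exhibiting an honest one when $\phi$ is satisfiable, is the delicate part of the argument.

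It then remains to collapse the auxiliary multi-tree instance down to exactly two trees on a common taxon set, and this is where the two preceding lemmas are used. I would design the gadgets so that their trees partition into two groups that are pairwise taxon-disjoint within each group, with the logical interaction between gadgets realized by taxa shared across the two groups. Repeatedly applying Lemma~\ref{lem:disjointtaxa} inside each group merges it into a single compound tree while preserving both the existence and the non-existence of a cherry-picking sequence; this leaves precisely two trees. A final application of Lemma~\ref{lem:sametaxa} places these two trees on the same taxon set $X$, once more preserving cherry-picking-sequence existence. Composing these equivalences yields that $\phi$ is satisfiable if and only if the resulting pair of trees on $X$ permits a cherry-picking sequence; since each gadget has bounded size, there are $O(n+m)$ of them, and both lemmas run in polynomial time, the whole reduction is polynomial.
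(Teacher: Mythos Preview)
Your plan is correct and follows essentially the same approach as the paper: a reduction from \textsc{3-SAT} that first builds a multi-tree instance from variable, clause, and ordering gadgets, arranged so that the trees split into two groups that are pairwise taxon-disjoint within each group, then collapses each group into a single tree via repeated use of Lemma~\ref{lem:disjointtaxa} and finally equalizes the taxon sets via Lemma~\ref{lem:sametaxa}. The paper's proof is precisely the detailed execution of this outline, with your ``connector structure'' realized by a pair of caterpillar \emph{formula} trees together with a family of blocking taxa that enforce the global ordering (variables committed before clauses are tested) and make the no-cheating argument go through.
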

\begin{proof} 
Given an order $\sigma=(x_1,x_2,\ldots,x_{|X|})$ on $X$, we can decide in polynomial time if, for each $i\in\{1,2,\ldots,|X|\}$, $x_i$ is in a cherry  in $T[-x_1,x_2,\ldots,x_{i-1}]$ and $T'[-x_1,x_2,\ldots,x_{i-1}]$. Hence, the problem of deciding if $T$ and $T'$ have a cherry-picking sequence is in NP.
To establish the theorem,  we use a reduction from {\sc 3-Sat}. This is the variant of {\sc Satisfiability} where each clause contains \emph{exactly} three literals, and the logical expression is in conjunctive normal form, i.e., 
\begin{equation}
\bigwedge_{i=1}^m C_i = \bigwedge_{i=1}^m (l_{i,1} \vee l_{i,2} \vee l_{i,3}),\notag
\end{equation}
where $l_{i,j} \in \{v^{(k)}, \neg v^{(k)} \mid 1 \leq k \leq n\}$. The corresponding set of variables is denoted with 
\[
V := \{v^{(1)}, v^{(2)}, \ldots, v^{(n)}\}.
\]
We reduce from the NP-complete version of {\sc 3-Sat} in which no variable occurs more than once in a given clause. Such restricted instances can
easily
be obtained by a standard transformation as described in~\cite{garey}. In the remainder of this proof, $n$ and $m$ refer to the number of variables and clauses in a restricted {\sc 3-Sat}  instance, respectively. 


Now, given an instance $I$ of {\sc 3-Sat}, we first construct a set $\cT$ of $3n + 5m + 2$ trees with overlapping taxa sets and  show that $I$ has a satisfying truth assignment if and only if $\cT$ has a cherry-picking sequence. We then repeatedly apply Lemma \ref{lem:disjointtaxa} in order to replace $\cT$ with two trees and, finally, apply Lemma \ref{lem:sametaxa} to complete the proof of this theorem. 

We start by describing the construction of $\cT$ that makes use of the introduction of a set $\{b_1,b_2,\ldots,b_{4n+3m},b_X,b_Y,b_Z\}$ of blocking taxa. As we will see later, each such taxon can only be pruned from $\cT$ after certain other taxa have been pruned first and so the main function of the blocking taxa is to  be unavailable for pruning which in turn constraints the number of possibilities to construct a cherry-picking sequence from $\cT$.
An explicit example of the construction of $\cT$ is given subsequently to this proof. \\
\begin{figure}[t]
\centering
\includegraphics[width=\textwidth]{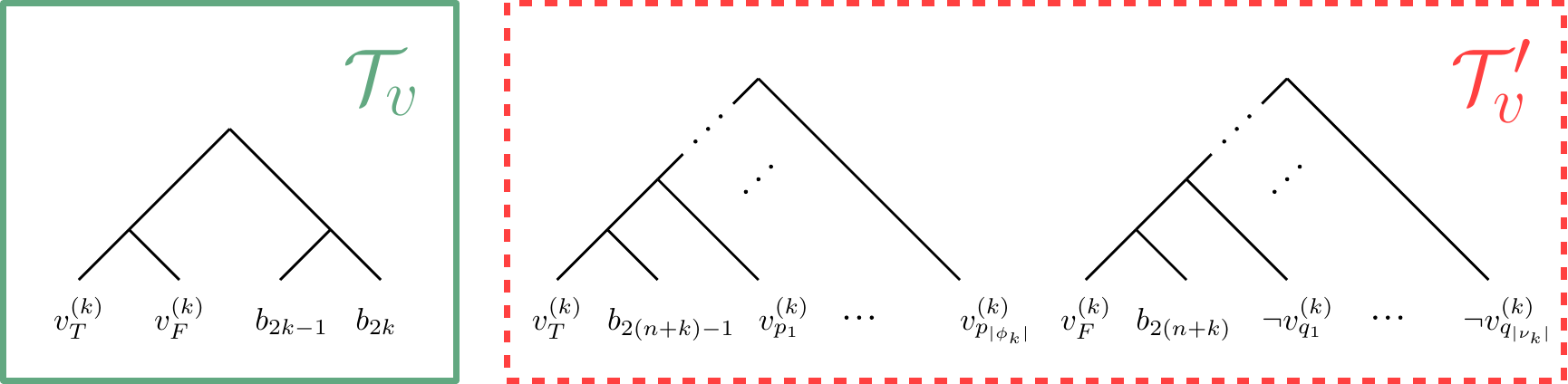}
\caption{Each variable $v^{(k)}$, is represented by a single tree in $\cT_v$ and two trees in $\cT_v'$.}
\label{fig:variableGadget}
\end{figure}

\emph{Variable gadget.} We construct two sets $\cT_v$ and $\cT_v'$ of trees. Each variable $v^{(k)}$ with $k\in\{1,2,\ldots,n\}$ adds  one tree on four taxa to $\cT_v$ which is the tree shown in the solid box of Figure~\ref{fig:variableGadget}. Each such tree has two blocking taxa and, intuitively, encodes whether $v^{(k)}$ is set to be true or false, depending on whether $v_T^{(k)}$ or $v_F^{(k)}$ is pruned first. 
Moreover, each variable $v^{(k)}$ adds two caterpillars to $\cT_v'$. Relative to a fixed $v^{(k)}$, the precise construction of these caterpillars is based on the definition of two particular tuples. Let $\phi_k := (p_1, p_2,\ldots, p_{|\phi_k|})$ (resp. $\nu_k := (q_1, q_2,\ldots, q_{|\nu_k|})$) \revision{be the indices, in ascending order,
of all the clauses in which
$v^{(k)}$ appears unnegated (resp. negated).} Since no clause contains any variable more than once, the elements in $\phi_k$ (resp. $\nu_k$) are pairwise distinct.

Now the taxon set of one caterpillar contains $v_T^{(k)}$, a new blocking taxon and, for each element $p_j$ in $\phi_k$, a new taxon $v_{p_j}^{(k)}$, while the taxon set of the other caterpillar contains $v_F^{(k)}$, a new blocking taxon and, for each element $q_j$ in $\nu_k$, a new taxon $\neg v_{q_j}^{(k)}$. The precise ordering of the leaves in both caterpillars is shown in the dashed box of Figure~\ref{fig:variableGadget}. It is easily checked that $|X(\cT_v)|=4n$ and, since each clause contains precisely three distinct variables, $|X(\cT_v')|=4n+3m$. Noting that the taxa sets of the trees in $X(\cT_v)$ and $X(\cT_v')$ only overlap in $v_T^{(k)}$ and $v_F^{(k)}$, we have 
\begin{equation}\label{eq:variable-count)}
|X(\cT_v\cup\cT_{v}')|=4n+4n+3m-2n=6n+3m
\end{equation}
distinct taxa over all trees in $\cT_v$ and $\cT_{v}'$.\\

\begin{figure}[t]
\centering
\includegraphics[width=.8\textwidth]{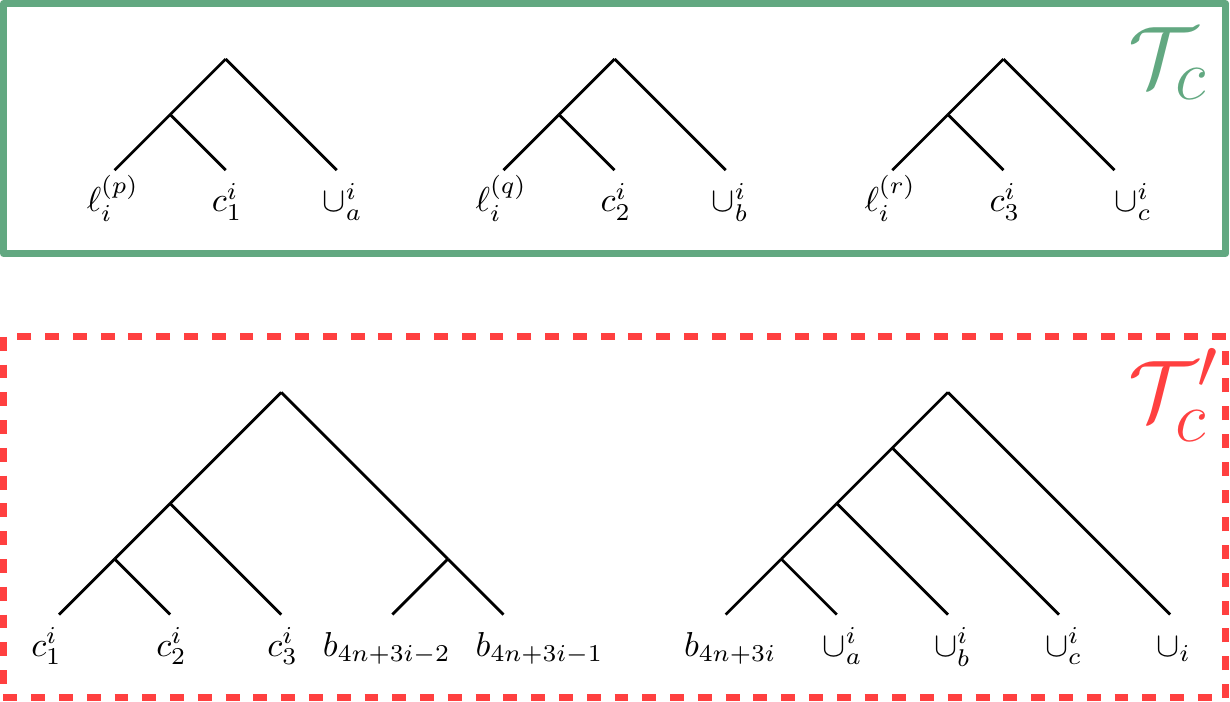}
\caption{Each clause $C_i$ is represented by three trees in $\cT_c$ and two trees in $\cT_c'$.
} 
\label{fig:clauseGadget}
\end{figure}

\emph{Clause gadget.} We construct two sets $\cT_c$ and $\cT_c'$ of trees. For each $i\in\{1,2,\ldots,m\}$, consider the clause $C_i = \ell^{(p)} \vee \ell^{(q)} \vee \ell^{(r)}$, where each  $k\in\{p,q,r\}$ is an element in $\{1,2,\ldots,n\}$ with $\ell^{(k)}\in\{v^{(k)},\neg v^{(k)}\}$. Relative to $C_i$, we add three three-taxon trees to $\cT_c$ which are shown in the solid box of Figure~\ref{fig:clauseGadget}. The first such tree has taxon set $\{\ell_i^{(p)},c_1^i,\cup_a^i\}$ where $\ell_i^{(p)}$ is an element in $\{v_i^{(p)},\neg v_i^{(p)}\}$. Note that $\ell_i^{(p)}$ labels a leaf of a tree in $\cT_v'$ while the other two taxa do not label a leaf of a tree in $\cT_v$ or $\cT_v'$. The other two trees in $\cT_c$ are constructed in an analogous way. Furthermore, for each $C_i$, we add two five-taxon trees to $\cT_c'$ which are shown in the dashed box of Figure~\ref{fig:clauseGadget}. The taxa set of the first tree contains two new blocking taxa and the three previously encountered elements $\{c_1^i,c_2^i,c_3^i\}$, while the second tree contains one new blocking taxon, the new taxon $\cup_i$, and the three previously encountered elements
\revision{$\{\cup_a^i,\cup_b^i,\cup_c^i\}$}. Similar to the variable gadgets, we now count the number of taxa in trees in $\cT_c$ and $\cT_c'$. As no two trees in  $\cT_c$ or $\cT_c'$ share a taxon, we have $|X(\cT_c)|=9m$ and $|X(\cT_c')|=10m$. Moreover, since all taxa of trees in $\cT_c'$, except for the blocking taxa and elements in $\{\cup_1,\cup_2,\ldots,\cup_m\}$, are also taxa of trees in $\cT_c$, we have  
\begin{equation}\label{eq:clause-count)}
|X(\cT_c\cup\cT_c')|=9m+10m-6m=13m.
\end{equation}

\emph{Formula gadget.} We complete the construction of $\cT$ by constructing two caterpillars $T_f$ and $T_f'$ which are shown in the solid and dashed box of Figure~\ref{fig:formulaGadget}, and define $$\cT=\cT_v\cup\cT_v'\cup\cT_c\cup\cT_c'\cup\{T_f,T_f'\}.$$ Summarizing the construction, we have $|\cT|=3n+5m+2$. Moreover, by construction and Equations (\ref{eq:variable-count)})-(\ref{eq:clause-count)}), it follows that $|X((\cT_v\cup\cT_v')\cap(\cT_c\cup\cT_c'))|=3m$. Now, since the three taxa $b_X$, $b_Y$, and $b_Z$, which are common to $T_f$ and $T_f'$, are the only taxa of these two trees that are not contained in the taxa set of any other constructed tree, we have
\begin{equation}\label{eq:noTaxa}
|X(\cT)|=6n+3m+13m-3m+3=6n+13m+3.
\end{equation}


\begin{figure}[t]
\centering
\includegraphics[scale=0.8]{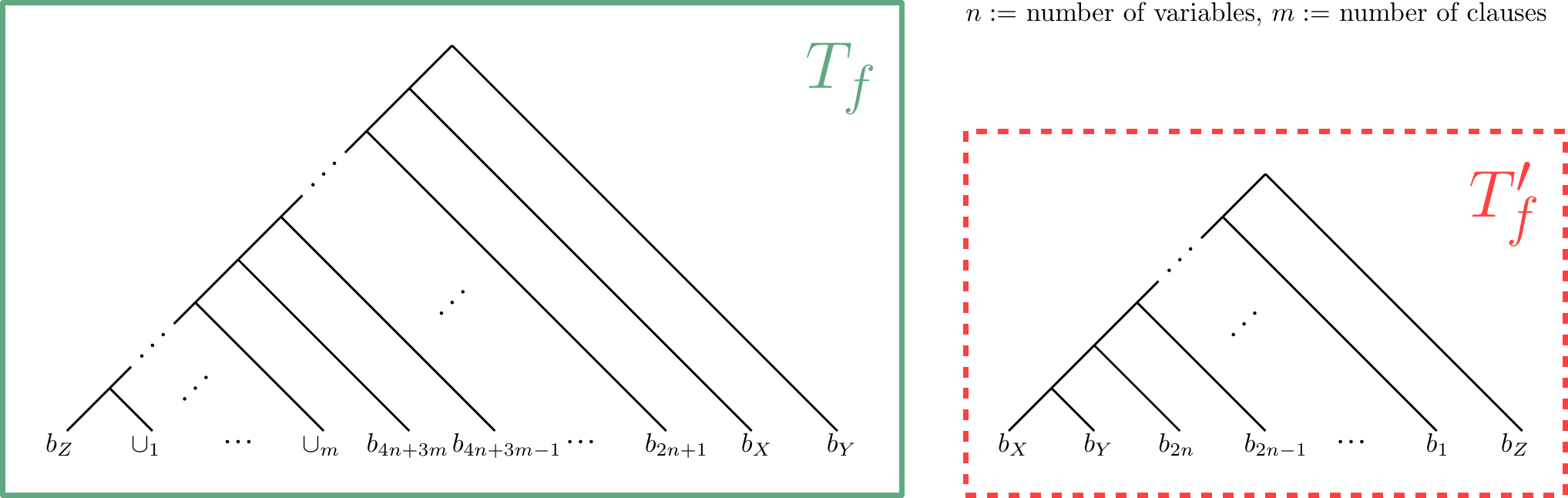}
\caption{The two trees $T_f$ and $T_f'$ in the construction of $\cT$ from $I$.
}
\label{fig:formulaGadget}
\end{figure}
We next prove the following claim:\\

\noindent {\bf Claim 1.}
$I$ is satisfiable if and only if $\cT$ has a cherry-picking sequence.\\

First, suppose that $I$ is satisfiable. Let $\beta: V\rightarrow \{T,F\}$ be a truth assignment for $V$ such that each clause is satisfied. We next describe a sequence of pruning operation. Noting that each taxon in $X(\cT)$ is contained in the taxa sets of exactly two trees in $\cT$ (a fact that we freely use throughout the rest of this proof), it is straightforward to verify that this sequence implies a cherry-picking sequence for $\cT$.

\paragraph*{Part 1: Variable gadgets.}

For each variable $v^{(k)}$ with $k\in\{1,2,\ldots,n\}$ do the following. If $\beta(v^{(k)})=T$
prune taxon $v_T^{(k)}$ from the two trees in $\cT_v\cup\cT_v'$ whose taxa sets contain $v_T^{(k)}$. On the other hand, if $\beta(v^{(k)})=F$
prune taxon $v_F^{(k)}$ from the two trees in $\cT_v\cup\cT_v'$ whose taxa sets contain $v_F^{(k)}$. Taken together, these pruning steps delete a single leaf of each tree in $\cT_v$ and a single leaf of half of the trees in $\cT_v'$. 

\paragraph*{Part 2: Clause gadgets.}
Consider the set of trees resulting from the pruning described in Part 1. For each $C_i= \ell^{(p)} \vee \ell^{(q)} \vee \ell^{(r)}$ with $i\in\{1,2,\ldots,m\}$, let $L_i$ be a subset of $\{p,q,r\}$ such that $|L_i|=2$ and, if $\ell_i^{(k)}$  is not satisfied by $\beta$, then $k\in L_i$. Setting $i=1$, process the three literals in $C_i$ from left to right in the following way.  
\begin{enumerate}
\item If $\ell_i^{(k)}$ is satisfied by $\beta$, prune $\ell_i^{(k)}$ from the tree in $\cT_c$ whose taxa set contains $\ell_i^{(k)}$ and, noting that $\ell_i^{(k)}\in\{v_i^{(k)}, \neg v_i^{(k)}\}$, prune $\ell_i^{(k)}$ from the tree in $\cT_v'$ whose taxa set contains $\ell_i^{(k)}$. 
\item If $k\in L_i$, prune $c_s^i$, where $s=1$ if $k=p$, $s=2$ if $k=q$, and $s=3$ if $k=r$, from the  two trees in $\cT_c\cup\cT_c'$ whose taxa sets contain $c_s^i$.
\item Prune $\cup^{i}_t$, where $t=a$ if $k=p$, $t=b$ if $k=q$, and $t=c$ if $k=r$, from the two trees in $\cT_c\cup\cT_c'$ whose taxa sets contain $\cup^{i}_t$.
\end{enumerate}
Now prune $\cup_i$ from the tree in $\cT_c'$ whose taxa set contains $\cup_i$, and prune $\cup_i$ from $T_f$. If $i<m$, increment $i$ by one and repeat this process with the next clause. Intuitively, by definition of $L_i$, the above process prunes exactly two elements in $\{\magenta{c_1^i},c_2^i,c_3^i\}$. Since each clause is satisfied by $\beta$, this guarantees that we can prune each element in $\{\cup_a^i,\cup_b^i,\cup_c^i\}$ and, subsequently $\cup_i$. 


\paragraph*{Part 3: Formula gadget and remaining taxa.}
Consider the set of trees resulting from the pruning described in Part 2. We prune the remaining taxa as follows.
\begin{enumerate}
\item In order, prune each of
\[
b_{4n+3m},\, b_{4n+3m-1},\, b_{4n+3m-2},\, \ldots, b_{4n+3i},\, b_{4n+3i-1},\, b_{4n+3i-2},\, \ldots,\, b_{4n+3},\, b_{4n+2},\, b_{4n+1}
\]
from a tree $\cT_c'$ whose taxa set contains the respective blocking taxa and from $T_f$. After all taxa have been pruned, each tree in $\cT_c'$ is either the null tree or consists of a single vertex labeled $c_s^i$ for some $s\in\{1,2,3\}$.
\item For each $i\in\{1,2,\ldots,m\}$, prune the unique taxon $c_s^i$ with $s\in\{1,2,3\}$ that has not been pruned in Part 2 from two trees in $\cT_c\cup\cT_c'$. Now, each tree in $\cT_c\cup\cT_c'$ that is not the null tree consists of a single vertex labeled $\ell_i^{(k)}$ for some $i\in\{1,2,\ldots,m\}$ and $k\in\{1,2,\ldots,n\}$.
\item For each $k\in\{1,2,\ldots,n\}$, note that one of $\{b_{2(n+k)-1},b_{2(n+k)}\}$ labels a leaf of a cherry in a tree in $\cT_v'$ while the other labels the leaf of a tree in $\cT_v'$ that consists of a single vertex. In order, prune each of
\[
b_{4n},\, b_{4n-1},\, \ldots, b_{2(n+k)},\, b_{2(n+k)-1},\, \ldots,\, b_{2n+2},\, b_{2n+1}
\]
from the  tree in $\cT_v'$ whose taxa set contains the respective blocking taxa and from $T_f$. 
\item In order, prune $b_X$ and $b_Y$ from $T_f$ and $T_f'$. 
\item Consider the remaining trees in $\cT_v$ and observe that each such tree consists of exactly three leaves, two of which are blocking taxa that form a cherry. In order, prune each of
\[
b_{2n},\, b_{2n-1},\, \ldots, b_{2k},\, b_{2k-1},\, \ldots,\, b_{2},\, b_{1}
\]
from $T_f'$ and the tree in $\cT_v$ whose taxa set contains the respective blocking taxon. 
\item For each $k\in\{1,2,\ldots,n\}$, let $v_X^{(k)}$ be the unique element in $\{v_T^{(k)},v_F^{(k)}\}$ that has not been pruned in Part 1. Prune $v_X^{(k)}$ from the two trees in $\cT_v\cup\cT_v'$ whose taxa sets contain $v_X^{(k)}$.
\item For each $i\in\{1,2,\ldots,m\}$ in increasing order, consider each literal $\ell_i^{(k)}$ in $C_i= \ell^{(p)} \vee \ell^{(q)} \vee \ell^{(r)}$ with $k\in\{p,q,r\}$ that is not satisfied by $\beta$. By processing such literals from left to right in $C_i$,  prune $\ell_i^{(k)}$ from the two trees in $\cT_v'\cup\cT_c$ whose taxa sets contain $\ell_i^{(k)}$. It is easily seen that the corresponding tree in $\cT_v'$ either consists of a single vertex or \magenta{contains} a cherry with a leaf labeled $\ell_i^{(k)}$.
\item Prune $b_Z$ from $T_f$ and $T_f'$.
\end{enumerate}

Now, relative to the elements in $X(\cT)$, we prune $2n$ elements in Parts 1 and 3.6, all $4m$ elements in $$\{\cup_1,\cup_a^i,\cup_b^i,\cup_c^i,\ldots,\cup_m,\cup_a^m,\cup_b^m,\cup_c^m\}$$ in Part 2, and all $4n+3m+3$ blocking taxa in Parts 3.1, 3.3, 3.4, 3.5, and 3.8.
Additionally, in Parts 2.1 and 3.7 we prune $3m$ taxa, and in Parts 2.2 and 3.2, we prune again $3m$ taxa. Summing up, we prune $$6n+13m+3$$ taxa, which is equal to the number of elements in $X(\cT)$.

Second, suppose that $\cT$ has a cherry-picking sequence $\blue{\sigma} = (x_1,x_2,\ldots, x_{|\blue{\sigma}|})$. We write $x_i \prec x_j$ if and only if $i < j$ and $x_i \succ x_j$ if and only if $i > j$. Further, let 
\[
M := \{1,2,\ldots,m\}, \; N := \{1,2,\ldots, n\}, \; B := \{b_1,b_2,\ldots,b_{4n+3m},b_X,b_Y,b_Z\}.
\]
We define a truth assignment $\beta\colon V \rightarrow \{T,F\}$ as follows
\[
\beta\left (v^{(k)}\right ) = 
\begin{cases}
T & \text{if } \exists i \in M\colon v_T^{(k)} \prec \cup_i, \\
F & \text{else.}
\end{cases}
\]
In order to show that $\beta$ \revision{satisfies} each clause \blue{of $I$}, we \blue{establish four}  necessary conditions that \blue{$\sigma$ fulfills} by construction. 
\begin{enumerate}
\item All taxa in $\{\cup_1, \cup_2, \ldots, \cup_m\}$ are pruned earlier than any blocking taxon:
\begin{equation}\label{eq:cond-one}
\forall i \in M \; \forall b \in B\colon \cup_i \prec b. 
\end{equation}
\magenta{\emph{Argument: }}Observe that the arrangement of \revision{$b_X$, $b_Y$, $b_Z$} \blue{in $T_f$ and $T'_f$ implies that all taxa in $\{\cup_1, \cup_2, \ldots, \cup_m\}$ are pruned prior to any blocking taxon.} 
Furthermore, we cannot prune any taxon in $T'_f$ until we \blue{have pruned all taxa from $T_f$ except for $b_X$, $b_Y$, and $b_Z$.} 
\blue{We will freely use Condition \magenta{1} throughout the remainder of this proof.}

\item Let \revision{$C_i = \ell^{(p)} \vee \ell^{(q)} \vee \ell^{(r)}$} be a clause of $I$. At least one taxon in $\{\ell_i^{(p)}, \ell_i^{(q)}, \ell_i^{(r)}\}$ is pruned earlier than $\cup_i$. Stated more formally: 
\begin{equation}
\forall i \in M \; \exists \ell_i^{(s_i)} \in  \left \{\ell_i^{(p_i)}, \ell_i^{(q_i)}, \ell_i^{(r_i)}\right \}\colon \ell_i^{(s_i)} \prec \cup_i.
\end{equation}
\magenta{\emph{Argument: }}Consider the five trees in $\cT_c \cup \cT'_c$ representing \blue{$C_i$} (see Figure~\ref{fig:clauseGadget}). In order to prune $\cup_i$, we have to prune all taxa in $\{\cup_a^i,\cup_b^i,\cup_c^i\}$ first. Since we can prune at most two taxa in $\{c^{i}_1, c^{i}_2, c^{i}_3\}$ \blue{prior to an element in $\{b_{4n+3i-2},b_{4n+3i-1}\}$}, pruning all taxa in $\{\cup_a^i,\cup_b^i,\cup_c^i\}$ is only possible if at least one   taxon in $\{\ell_i^{(p)}, \ell_i^{(q)}, \ell_i^{(r)}\}$ has been pruned previously. 

\item Let $v^{(k)} \in V$ be any variable of $I$. \blue{Recall the definition of the tuples $\phi_k$ and $\nu_k$ that is used in the construction of the variable gadget. If there exists a  $v_{i}^{(k)}$ with $v_{i}^{(k)} \prec \cup_i$ for some $i \in \phi_k$, then $v_T^{(k)}$ is also pruned earlier than $\cup_i$. Stated formally:} 
\begin{equation}
\forall k \in N \; \forall i \in \phi_k \colon \left ( v_{i}^{(k)} \prec \cup_{i} \implies  v_T^{(k)} \prec \cup_i \right ).
\end{equation}
\magenta{\emph{Argument: }}Consider a variable $v^{(k)} \in V$ such that $v_{i}^{(k)} \prec \cup_i$ for some $i \in \phi_k$. Since there is no blocking taxon $b \in B$ with $b \prec \cup_i$, we have \blue{$\cup_i \prec b_{2(n+k)-1}$}. Thus, $v_T^{(k)}$  is pruned from the \blue{associated caterpillar in $\cT'_v$ that contains} $v_{i}^{(k)}$  such that $v_T^{(k)} \prec v_{i}^{(k)} \prec \cup_i$ (see Figure~\ref{fig:variableGadget}). 

The following can be shown analogously. \blue{If there exists a $\neg v_{i}^{(k)} \prec \cup_i$ for some $i \in \nu_k$, then $v_F^{(k)}$ is also pruned earlier than $\cup_i$. Stated formally:}
\begin{equation}
\forall k \in N \; \forall i \in \nu_k \colon \left (\neg v_{i}^{(k)} \prec \cup_{i} \implies  v_F^{(k)} \prec \cup_i \right ).
\end{equation}

\item Let $v^{(k)} \in V$ be any variable of $I$. If $v_T^{(k)}$ is pruned earlier than some taxon in $\{\cup_1, \cup_2,\ldots, \cup_m\}$, then $v_F^{(k)}$ is pruned later than all taxa in $\{\cup_1, \cup_2,\ldots, \cup_m\}$, i.e., 
\begin{equation}\label{eq:v_T_early}
\forall k \in N \colon \left ( \left ( \exists i \in M \colon v_T^{(k)} \prec \cup_i \right ) \implies \left ( \forall i \in M \colon v_F^{(k)} \succ \cup_i \right ) \right ).
\end{equation}
\magenta{\emph{Argument: }}Consider a variable $v^{(k)} \in V$ such that $v_T^{(k)} \prec \cup_i$ for some $i \in M$. Assume towards a contradiction that there is \blue{some} $j \in M$ such that $v_F^{(k)} \prec \cup_j$. Then, one of the \blue{two blocking taxa $b_{2k-1}$ and $b_{2k}$ is pruned prior to $v_F^{(k)}$} (see Figure~\ref{fig:variableGadget}). But this is not possible since there is no blocking taxon $b \in B$ with $b \prec \cup_j$. 

As an immediate consequence of statement~\eqref{eq:v_T_early}, we get the analogous statement for $v_F^{(k)}$, i.e., 
\begin{equation}
\forall k \in N \colon \left ( \left ( \exists i \in M \colon v_F^{(k)} \prec \cup_i \right ) \implies \left ( \forall i \in M \colon v_T^{(k)} \succ \cup_i \right ) \right ).
\end{equation}

\end{enumerate}

Now, we show that $\beta$ indeed satisfies each clause of $I$. For each clause \revision{$C_i = \ell^{(p)} \vee \ell^{(q)} \vee \ell^{(r)}$}, we have $\ell_i^{(s)} \prec \cup_i$ for some $\ell_i^{(s)} \in  \left \{\ell_i^{(p)}, \ell_i^{(q)}, \ell_i^{(r)}\right \}$ (Condition 2). Since $\ell_i^{(s_i)}\in \left  \{v_i^{(k)}, \neg v_i^{(k)} \right \}$ for some $k \in N$, we have $v_T^{(k)} \prec \cup_i$ if $\ell_i^{(s_i)} = v_i^{(k)}$ and $v_F^{(k)} \prec \cup_i$ if $\ell_i^{(s_i)} = \neg v_i^{(k)}$ (Condition 3). Hence, by setting $\beta(v^{(k)}) = T$ if $v_T^{(k)} \prec \cup_i$ and $\beta(v^{(k)}) = F$ if $v_F^{(k)} \prec \cup_i$, we satisfy at least one literal of each clause. Note that we can assign arbitrary truth values to variables $v^{(k)}$ with $v_T^{(k)} \succ \cup_i$ and $v_F^{(k)} \succ \cup_i$ for all $i \in M$. Here, we choose to set all these variables to $F$. The truth assignment $\beta$ is consistent, since at least one taxon in $\left \{ v_T^{(k)}, v_F^{(k)} \right \}$ is pruned later than all taxa in $\{\cup_1, \cup_2, \ldots, \cup_m\}$ (Condition 4). Hence, the truth assignment $\beta$ is consistent and satisfies each clause of $I$.

\paragraph*{Folding into two trees on the same set of taxa.} The trees in $\cT_v\cup\cT_c\cup\{T_f\}$ and, similarly, the trees in $\cT_v'\cup\cT_c'\cup\{T_f'\}$  (see Figures~\ref{fig:variableGadget},~\ref{fig:clauseGadget}, and~\ref{fig:formulaGadget})  have mutually disjoint taxa sets. Hence, by $n+3m$ applications of Lemma~\ref{lem:disjointtaxa}, we can construct a compound tree $S$ for all trees in $\cT_v\cup\cT_c\cup\{T_f\}$ and, by $2n+2m$ applications of Lemma~\ref{lem:disjointtaxa}, we can construct a compound tree $S'$ for all trees in $\cT_v'\cup\cT_c'\cup\{T_f'\}$ such that $\cT$ has a cherry-picking sequence if and only if $S$ and $S'$ have a cherry-picking sequence. 
Lastly, by appying Lemma~\ref{lem:sametaxa}, we obtain two trees $T$ and $T'$ from $S$ and $S'$, respectively, such that $X(T)=X(T')$, and $S$ and $S'$ have a cherry-picking sequence if and only if $T$ and $T'$ have such a sequence. It now follows that $I$ is satisfiable if and only if $T$ and $T'$ have a cherry-picking sequence.

\paragraph*{Number of taxa in the final instance.} It remains to show that $T$ and $T'$ can be constructed in polynomial time. By Equation~\ref{eq:noTaxa}, recall that $|X(\cT)|=6n+13m+3$. Now, since we apply Lemma~\ref{lem:disjointtaxa} a total of $3n+5m$ times and each application introduces two new taxa, we have $$\magenta{|X(\{S,\,S'\})|}= 6n+13m+3+ 2(3n+5m)=12n+23m+3.$$
Observe that each taxon in $X(\cT)$ labels a leaf of a unique tree in $\cT_v\cup\cT_c\cup\{T_f\}$ and a leaf of a unique tree in $\cT_v'\cup\cT_c'\cup\{T_f'\}$. It therefore follows that each taxon that is contained in exactly one of $X(S)$ and $X(S')$ has been introduced by an application of Lemma~\ref{lem:disjointtaxa}. Conversely, each application of this lemma introduces two taxa that are both contained in exactly one of $X(S)$ and $X(S')$.
Hence, \revision{recalling that in obtaining $T$ and $T'$ from $S$ and $S'$, respectively, an additional leaf labeled $\rho$ is introduced (see the third sentence in the proof of Lemma~\ref{lem:sametaxa}), we have}
$$|X(T)|=|X(T')| = 12n+23m+3+2(3n+5m)+1=18n+33m+4.$$
It now follows, that the size of $T$ and $T'$ as well as the time it takes to construct these two trees are polynomial. This completes the proof of the theorem.\qed
\end{proof}

\begin{figure}[t]
\centering
\includegraphics[scale=0.8]{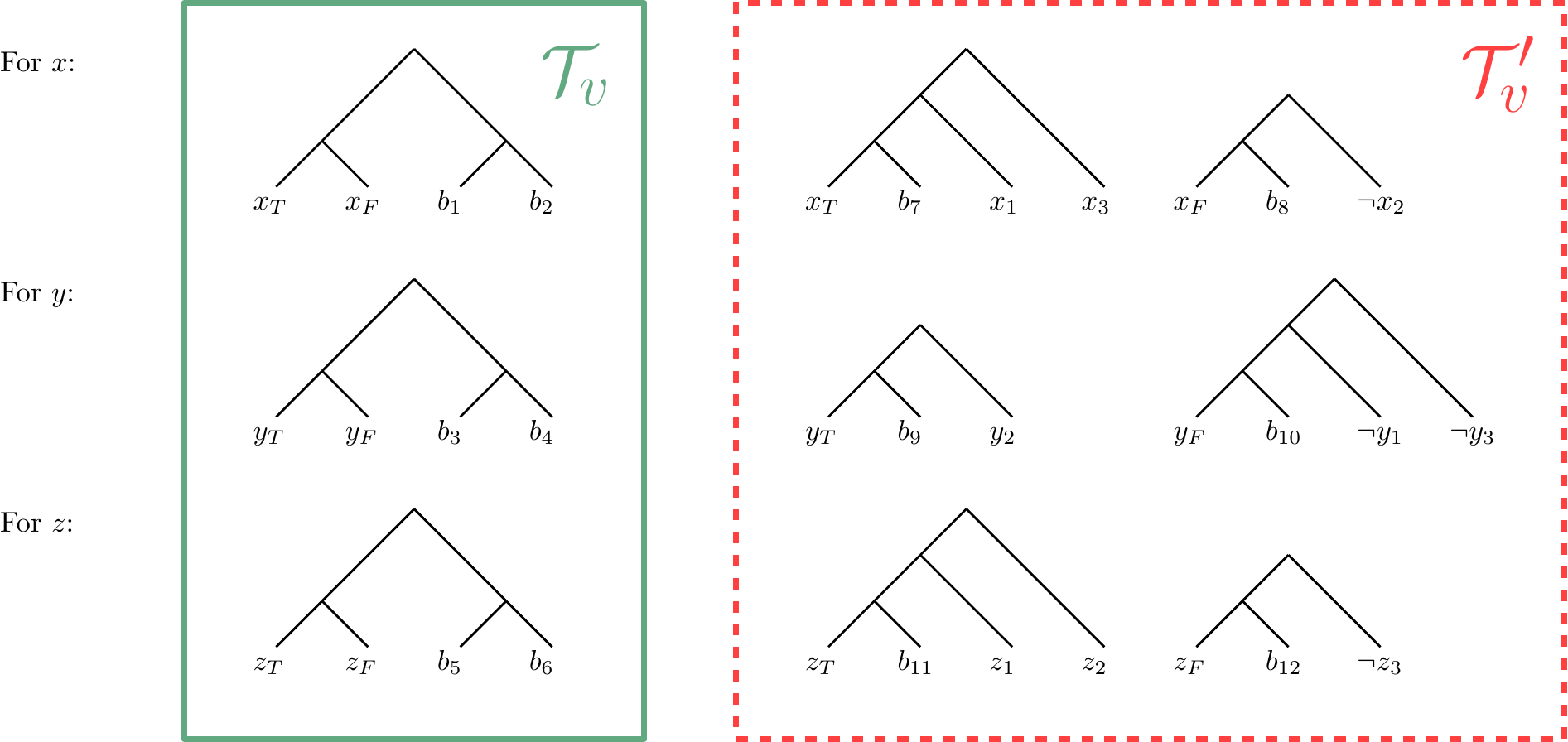}
\caption{The variable gadget for $(x \vee \neg{y} \vee z) \wedge (\neg x \vee y \vee z) \wedge (x \vee \neg y \vee \neg z)$.
}
\label{fig:variableGadgetsExample}
\end{figure}

\begin{figure}[t]
\centering
\includegraphics[scale=0.8]{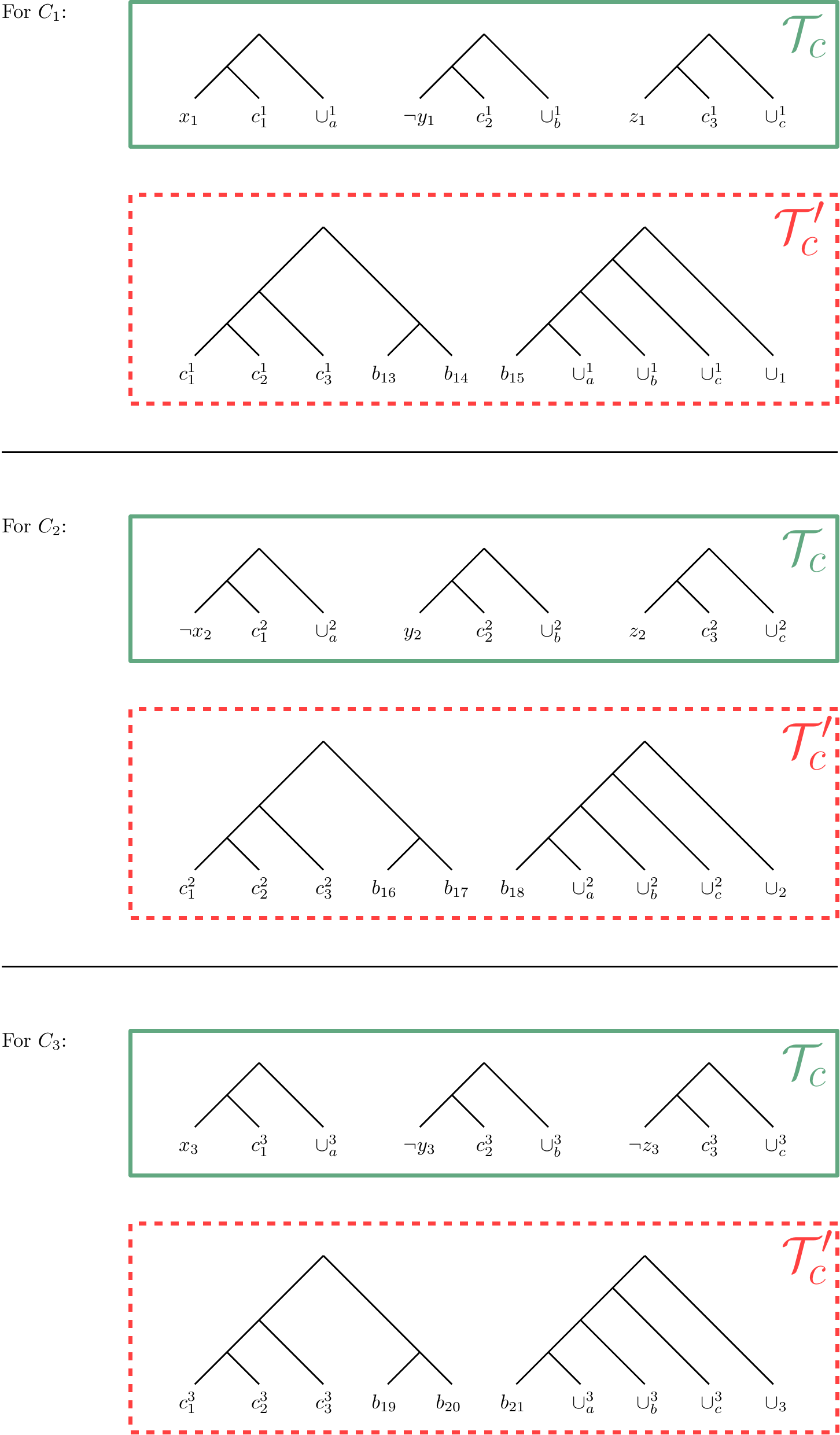}
\caption{The clause gadget for $(x \vee \neg{y} \vee z) \wedge (\neg x \vee y \vee z) \wedge (x \vee \neg y \vee \neg z)$
}
\label{fig:clauseGadgetsExample}
\end{figure}

\begin{figure}[t]
\centering
\includegraphics[scale=0.8]{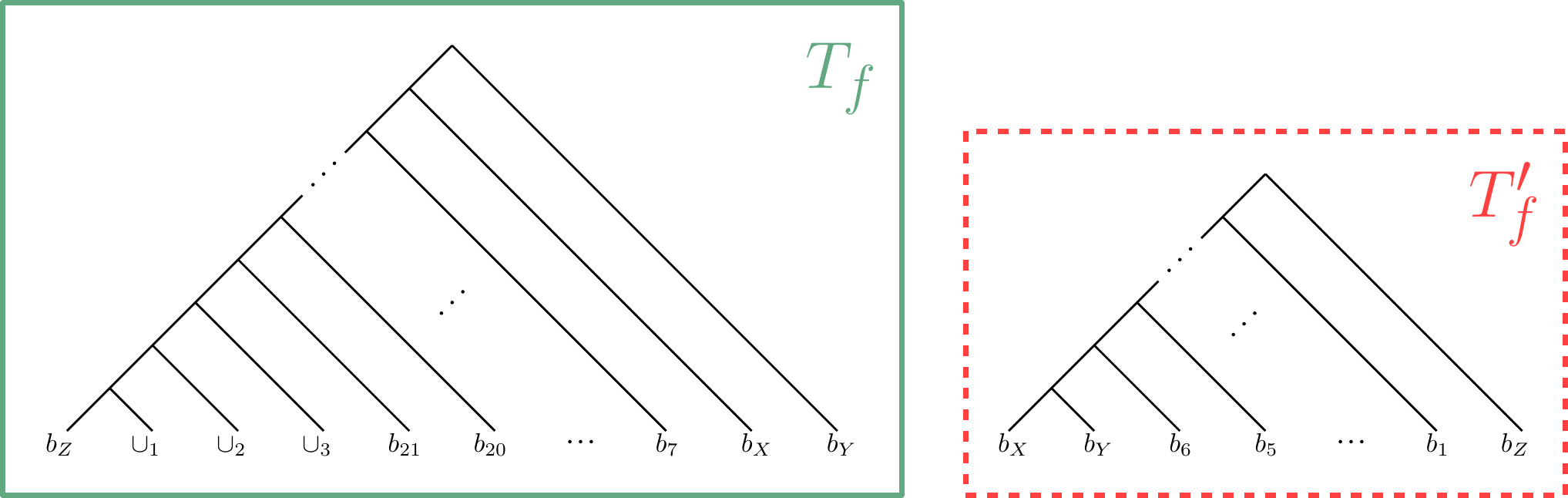}
\caption{The formula gadget for $(x \vee \neg{y} \vee z) \wedge (\neg x \vee y \vee z) \wedge (x \vee \neg y \vee \neg z)$}
\label{fig:formulaGadgetsExample}
\end{figure}

To illustrate the proof of Theorem~\ref{thm:hard}, we now give an explicit example of a {\sc 3-Sat} instance and show how it is reduced to a set of trees by following the construction that is described in the aforementioned proof. Let $I$ be the following instance of {\sc 3-Sat} 
\[
\underbrace{(x \vee \neg{y} \vee z)}_{C_1} \wedge \underbrace{(\neg x \vee y \vee z)}_{C_2} \wedge \underbrace{(x \vee \neg y \vee \neg z)}_{C_3}.
\]
For the purpose of ordering the blocking taxa in the same way as described in the proof, we regard variable $x$ as $v^{(1)}$, variable $y$ as $v^{(2)}$, and variable $z$ as $v^{(3)}$.
Let $n=3$ (resp. $m=3)$ be the number of variables (resp. clauses) in $I$.
We construct a set $\cT$ of $3n+5m+2=26$ trees. The 9 trees that represent the variable gadget $\cT_v$ and $\cT_v'$ are shown in Figure~\ref{fig:variableGadgetsExample}, the 15 trees that represent the clause gadget $\cT_c$ and $\cT_c'$ are shown in Figure~\ref{fig:clauseGadgetsExample} and the two trees that represent the formula gadget $T_f$ and $T_f'$ are shown in Figure~\ref{fig:formulaGadgetsExample}. Note that $|X(\cT)|=3n+13m+3=60$. Clearly, $I$ is satisfied for the truth assignment $\beta: \{x,y,z\}\rightarrow \{T,F\}$ with $\beta(x)= \beta(z)= T, \beta(y) = F$. To see that $\cT$ also has a cherry-picking sequence of length 60, we follow the sequence of pruning operations that is described in Parts 1-3 in the first direction of the proof of Claim 1
\begin{align*}
(&x_T, y_F, z_T, \\
&x_1, c_1^1, \cup_a^1, \neg{y_1}, c_2^1, \cup_b^1, z_1, \cup_c^1, \cup_1, c_1^2, \cup_a^2, c_2^2, \cup_b^2, z_2, \cup_c^2, \cup_2, x_3, \cup_a^3, \neg y_3, c_2^3, \cup_b^3, c_3^3, \cup_c^3, \cup_3, \\
&b_{21}, \blue{b_{20}}, \ldots, b_{13},c_3^1, c_3^2, c_1^3,\blue{b_{12}, b_{11}}, \ldots, b_{7},\revision{b_X}, \revision{b_Y},b_{6}, b_{5}, \ldots, b_{1},x_F, y_T, z_F,\neg x_2, y_2, \neg z_3,\revision{b_Z}),
\end{align*}
where line 1 corresponds to Part 1, line 2 corresponds to Part 2, and line 3 corresponds to Part 3.

\section{Discussion}

Given any set of input trees, there always exists some phylogenetic network displaying them. Roughly speaking, one can simply merge the input trees at the leaves \revision{and at the root}. However, what happens when you restrict the network to have some additional, biologically motivated, properties? Then there might not always exist a network displaying the input trees. Moreover, deciding whether or not there exists such a network may be a difficult problem. Indeed, in this paper we have shown that even if the input consists of only two binary trees, it is already \red{NP-complete} to decide whether there exists any temporal phylogenetic network displaying them.

One could be tempted to look for approximation algorithms for the associated optimization problem: given a set of phylogenetic trees, find a temporal network that displays them and has smallest possible reticulation number, if such a network exists. Note, however, that an approximation algorithm is required to always output a valid solution, for any valid input. The problem formulation above (based on~\cite{humphries2013complexity}) does not specify what a valid solution is when there does not exist a temporal network displaying the input trees. Nevertheless, whatever the output in that case \red{is},
it can be checked in polynomial time whether the output of the algorithm is a temporal network displaying the input trees. \red{This is because temporal networks are tree-child, and checking whether a tree-child network displays a tree can be achieved in polynomial time \cite{locating}}. Hence, any approximation algorithm for the problem could be used to decide in polynomial time whether there exists a temporal network displaying the input trees, which is not possible, unless P$=$NP, given the \red{NP-completeness} shown in this paper.

Therefore, a more promising direction is to consider fixed-parameter algorithms for the associated parameterized version of the problem. Given a set of phylogenetic trees and a parameter~$k$, decide whether there exists a temporal network that displays the input trees and has reticulation number at most~$k$. One then aims at algorithms solving this problem in $O(|X|^{O(1)}f(k))$ time, with~$f$ some function of~$k$, preferably of the form~$c^k$ with~$c$ a small constant. \red{Intuitively}, such an FPT algorithm is only exponential in the reticulation number and not in the number of leaves. Indeed, even though it is \red{NP-complete} to decide whether there exists a temporal network with unlimited reticulation number, for small reticulation numbers this problem might be much easier. In fact, for instances of two binary trees a fixed-parameter algorithm is already known~\cite{humphries2013cherry}. Important open problems include the question whether such algorithms exist for instances of more than two trees and whether algorithms can be developed that work well in practice.

It would also be interesting to consider other biologically motivated network classes. For example, binary tree-child (e.g.~\cite{tree-child}) or tree-sibling networks (e.g.~\cite{tree-sibling}). Could it be that one of the associated decision problems is nontrivial (for more than two input trees) but polynomial-time solvable? For other network classes, such as tree-based (e.g.~\cite{tree-based}) or time-consistent (e.g.~\cite{time-consistent}) networks, it is known that there always exists a solution~\cite{universal-tree-based}. For such classes, it would be interesting to study the optimization version of the problem.

%

\section{Acknowledgements}
We thank Mat\'{u}\v{s} Mihal\'{a}k for useful discussions. Leo van Iersel was partly supported by the Netherlands Organization for Scientific Research (NWO), including Vidi grant 639.072.602, and partly by the 4TU Applied Mathematics Institute. Simone Linz was supported by the New Zealand Marsden Fund. 

\bibliography{bibliographyCherry}{}
\bibliographystyle{plain}

\end{document}